\newtheorem{myTheo}{Theorem}
\title{A Game Generative Network Framework with Its Application to Relationship Inference}
\author{
Jie Huang$^1$
\and
Fanghua Ye$^2$\and
Xu Chen$^3$
\affiliations
$^1$University of Illinois at Urbana-Champaign\\
$^2$University College London\\
$^3$Sun Yat-sen University\\
\emails
jeffhj@illinois.edu,
smartyfh@outlook.com,
chenxu35@mail.sysu.edu.cn
}
\begin{document}

\maketitle

\begin{abstract}
A game process is a system where the decisions of one agent can influence the decisions of other agents. In the real world, social influences and relationships between agents may influence the decision makings of agents with game behaviors. And in turn, this also gives us the opportunity to mine such information from agents by the observed interactions of them in a game process. In this paper, we propose a Game Generative Network (GGN) framework that utilizes the deviation between the real game outcome and the ideal game model to generate networks for game processes, which opens a door for understanding agents with game behaviors by network mining approaches. We illustrate how to apply GGNs to infer the hidden relationships between agents with game behaviors and conduct experiments on team games as a concrete example. Experimental results demonstrate that our proposed framework can reveal the hidden relationships of agents in such games.
\end{abstract}

\section{Introduction}

In the real world, there are a variety of systems with cooperation and competition behaviors among agents. Such behaviors are usually modeled as games where the decisions of one agent can influence the decisions of other agents \cite{osborne1994course,myerson2013game}, and the decisions of agents may also be affected by the relationships between them \cite{chen2014social}. 
However, due to the complexity of relationships between agents -- difficult to obtain and model, most game-theoretic models do not consider relationships between agents comprehensively or even ignore such information.
If we want to better understand agents with game behaviors, it is necessary for us to mine the social influences and relationships between them. Also, by incorporating such information, we can build game models that better characterize agents' real game behaviors.

On the other hand, the pattern of connections and relationships between objects in a system can be represented as a network \cite{newman2003structure}, where the objects correspond to nodes and the relationships between objects correspond to edges. A game is a system of multiple agents with special relationships, either explicit or implicit. Thus it is possible for us to generate a network for a game process and mine the relationships between agents on the network. Although there are a large number of studies on social network mining, none can be directly applied to social agents with game behaviors. Generating networks for game processes will no doubt open a door for mining hidden information of agents with game behaviors.

In addition, relationships between agents in game processes are usually more complex than common relationships modeled by traditional networks. In addition to positive relationships, there may also be some negative relationships between agents in a game system, which greatly influence the strategies taken by agents. In a game process, positive and negative relationships reflect in many different ways. For instance, a positive relationship may represent friendship, cooperation, trust or win-win and a negative relationship may represent hostility, competition, distrust or loss-loss \cite{guha2004propagation,brzozowski2008friends,pang2008opinion}. Compared to traditional networks that reduce the relationships between objects as simple pairwise links, signed networks \cite{tang2016survey} in which the weights of edges can be positive and negative are better to represent such positive and negative relationships between agents in game processes. 

Based on the above considerations, we propose a general Game Generative Network (GGN) framework, which generates signed networks for game processes based on the deviation between the real game outcome and the ideal game model. Our assumption for such games is that if there are some relationships between agents, the relationships will affect the strategies that agents take.
In general, GGNs serve as the intermediaries between game processes and network mining approaches.
In this paper, we mainly focus on the relationship inference task. 
In spite of the generality of GGNs with various potential applications, 
we first give a simple illustration of the Prisoners' Dilemma \cite{rapoport1965prisoner}, and then introduce a game generative network mining approach and apply it to infer the relationships between social agents in team games.

The main contributions of this paper are summarized as follows:
\begin{itemize}
  \item We appear to be the first to mine positive and negative relationships between social agents with game behaviors from game processes.
  \item We propose a Game Generative Network framework, which opens a door for analyzing hidden relationships of agents with game behaviors via network mining approaches.
  \item We study team game and apply our framework to it.
  Experimental results demonstrate the feasibility of our GGN framework on revealing relationships between agents in such games.
\end{itemize}

The code and data are available at \url{https://github.com/jeffhj/GGN}.

\section{Game Generative Network}

\subsection{Preliminaries}

\subsubsection*{Game Theory} 

Game theory is a tool to analyze decision making in multi-agent interactions. Based on different analyzing methods, there are different types of game models designed for different specific games. 
A \textit{game} denoted as $\Gamma$ usually includes three parts: the set of agents $\mathcal{P}$, 
the strategy space $\mathcal{S}$, and the utility function $u: \mathcal{S} \to \mathbb{R}$. If the number of agents in a game is $n$, we call it $n$-agent or $n$-player game. A \textit{(pure) Nash equilibrium} in an $n$-agent game is a list of strategies $\textbf{s}^* = \{s^*_1, s^*_2, \dots, s^*_n\}$ such that 
\begin{equation}
s_i^* = \mathop{\text{argmax}}_{s_i} u_i(s^*_1,s^*_2,\dots,s_i,\dots,s^*_n),
\end{equation}
where $s_i$ is the strategy taken by agent $p_i$. In other words, Nash equilibrium is a stable strategy list that neither agent can increase her payoff by taking another strategy, thus no agent will try to change her strategy.

\subsubsection*{Signed Network} 

A signed network is defined as a graph $\mathcal{G}(\mathcal{V},\mathcal{E},w)$, where $\mathcal{V}$ is the node set, $\mathcal{E}$ is the edge set, and $w: \mathcal{E} \to \mathbb{R}$ is a weight mapping function associated with each edge and the weight can be either positive or negative. 

\subsection{A Special Case: Dynamic Game}

We first study a special type of game named dynamic game. \textit{Dynamic game} is a kind of game in which decisions of agents are made at various times with some of the earlier decisions being public knowledge when the later decisions are being made. Here we consider the $n$-agent dynamic game with agent set $\mathcal{P} = \{p_1,p_2,\dots,p_n\}$, where each agent has multiple opportunities to change her strategies and the payoff updates after each strategy changes. 

In order to involve the interactions between agents, we build a network with node set $\mathcal{V} = \mathcal{P}$. The original strategy list of each agent at time $t-1$ is $\textbf{s} = \{s_1, s_2, \dots, s_i, \dots, s_n\}$. At time $t$, agent $p_i$ changes her strategy from $s_i$ to $s'_i$, which converts $\textbf{s}$ to {$\textbf{s}' = \{s_1, s_2, \dots, s'_i, \dots, s_n\}$}. 
Since the strategies made by agents in dynamic games are based on some public knowledge, agents are aware of others' strategies and can make decisions according to such knowledge. 
Thus relationships between agents may be reflected dynamically in such interactions.
Based on this observation, we use the utility difference before and after the conversion of each agent to build the edges of the network, the weight of each edge is formulated as follows:
\begin{equation}
w(e^{(t)}_{i,j}) = u_j(\textbf{s}') -  u_j(\textbf{s}),
\end{equation}
\noindent where $u_j(\textbf{s})$ is the utility of agent $p_j$ with strategy list $\textbf{s}$, and $e^{(t)}_{i,j}$ is the directed edge from $p_i$ to $p_j$ built at time $t$. Repeating the above process for $T$ times, we will end up with a directed signed network containing the interaction information of agents.

\subsection{General Game Generative Network Framework}

For the special case in the previous section, it is natural to generate a network implying the relationship information of agents. However, in most cases, we cannot get the whole game process. This is to say, in a real game, we may only get the final outcome, i.e., the real strategies taken by agents. Besides, such dynamic games are not universal. So a more general game generative network framework is necessary.

Most existing game-theoretic models assume that all the agents are in selfish behaviors, where each agent aims at maximizing her own utility. Such assumptions ignore social influences and relationships between agents. Besides, sometimes relationships among agents will cause a deviation in the utility of each agent. On one hand, most game models either do not consider relationships between agents or do not consider such information comprehensively. One the other hand, such information of agents is also difficult to obtain and hard to characterize by game models.

As discussed above, an ideal game model is based on the assumption that each agent is selfish and usually does not consider the complete relationships between agents, but the real outcome does not follow such assumptions and reflects some unobserved relationships. For instance, from an idea game model, we may find that both agent $p_1$ and agent $p_2$ can achieve a higher payoff by reducing the other's payoff, but the real outcome shows that they did not do so (we will use Prisoners' Dilemma as a concrete example in the following section). This may be because there is a positive relationship between them, and such a relationship will be reflected in the deviation between the real game outcome and the ideal game model.
Therefore, in order to generate networks for game processes, one practical approach is to use the deviation between the real game outcome and the ideal game model.

Assume the real strategy list is $\textbf{s}_r = \{s_1^r,s_2^r,\dots,s_i^r,\dots,s_n^r\}$. For agent $p_i$, we can predict the strategy $s^*_i$ taken by $p_i$ from an ideal game model. By replacing $s_i^r$ with $s_i^*$, we get a new strategy list $\textbf{s}' = \{s_1^r,s_2^r,\dots,s_i^*,\dots,s_n^r\}$. And then we use the utility difference before and after the conversion of each agent to build the edges of the network, the weight of each edge is formulated as follows:
\begin{equation}
w(e_{i,j}) = u_j(\textbf{s}_r) - u_j(\textbf{s}').
\label{eq:ggn_weight}
\end{equation}

There are two ways to choose $s_i^*$: one way is to use the strategy in a pure strategy Nash equilibrium. If there are multiple pure strategy Nash equilibria, we can also build multiple networks. But the concern is that the Nash equilibrium may be hard to find and pure strategy Nash equilibrium does not exist in some cases. The second way is to choose the strategy that maximizes the utility of $p_i$ as follows:
\begin{equation}
s_i^* = \mathop{\text{argmax}}_{s^{\#}_i} u_i({s_1^r,s_2^r,\dots,s^{\#}_i,\dots,s_n^r}).
\end{equation}

Based on the above framework, we can generate signed networks for most game processes, namely Game Generative Networks (GGNs).

\section{Relationship Inference on Game Generative Network}

To show how to apply GGNs to solve downstream tasks, we study relationship inference with GGNs.
Relationship inference aims to infer the relationships between agents. And specifically, in this paper, we aim at judging whether the relationship between two agents in a game is positive or negative.
In some games, the relationships between agents may influence agents' decision makings. For instance, some positive relationships (e.g., friendship, cooperation, trust, and win-win) and negative relationships (e.g., hostility, competition, distrust, or loss-loss) may be involved in game processes, and the utilities of agents are affected by such relationships. But most of the time, we have no knowledge about relationships between agents or the signs of such relationships. Thus relationship inference on social agents with game behaviors is quite interesting and useful if we want to further understand agents' behaviors and the relationships between them.

\begin{figure}[tp!]
\centering
\includegraphics[width=\linewidth]{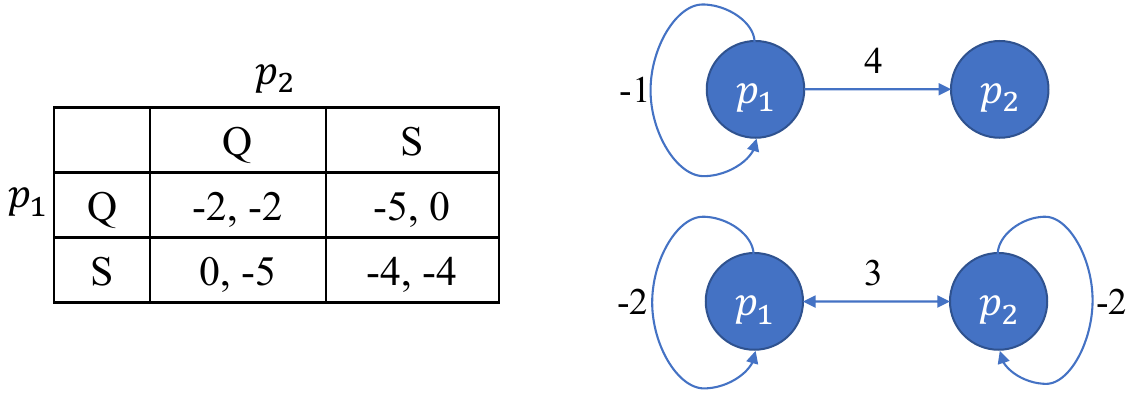}
\caption{The illustration of Prisoners' Dilemma.}
\descriptionlabel{}
\label{fig:prisoners}
\end{figure}

\subsection{Prisoners' Dilemma}

Here we give a simple illustration of relationship inference with GGNs. We study the famous problem of Prisoners' Dilemma \cite{rapoport1965prisoner}. The problem is defined as follows: two criminals ($p_1$ and $p_2$) are imprisoned and they cannot talk to each other. The police offer each prisoner a bargain -- keep quiet (Q) or to squeal (S). The payoff table of these two criminals is shown in Fig. \ref{fig:prisoners} (left). For instance, if $p_1$ chooses strategy S and $p_2$ chooses strategy Q, $p_1$ will be released and $p_2$ will be imprisoned for 5 years. 

In the ideal game model where the relationship between these two agents is ignored, Nash equilibrium of Prisoners' Dilemma is that both criminals choose to squeal thus both of them will be imprisoned for 4 years. But if we consider the relationship between $p_1$ and $p_2$, the real strategies they take may not be consistent with the Nash equilibrium of an ideal model.

For example, suppose that the real outcome of the Prisoners' Dilemma game happens as follows: $p_1$ chooses to keep quiet and $p_2$ chooses to squeal. Based on the framework of GGN, we can generate a network of $p_1$ and $p_2$ in Fig. \ref{fig:prisoners} (upper right). 
For example, $w(e_{1,2}) = u_2(\{Q,S\}) - u_2(\{S,S\}) = 0 - (-4) = 4$.
From the network, we can conclude that $p_1$ has a positive relationship to $p_2$, that is $p_1$ is doing favor towards $p_2$. Another possible situation is that both $p_1$ and $p_2$ keep quiet. From the GGN illustrated in Fig. \ref{fig:prisoners} (lower right), we can conclude that there is a strong relationship between $p_1$ and $p_2$ so that they believe in each other. These examples illustrate that GGNs can be useful to reveal relationships between agents.

\begin{figure*}[tp!]
\centering
\includegraphics[width=\linewidth]{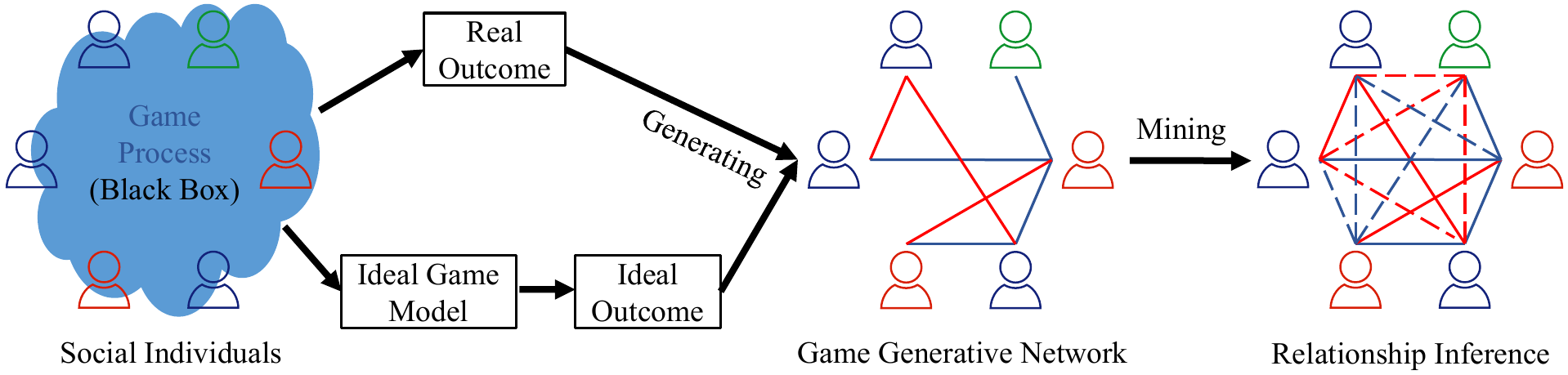}
\caption{The process of relationship inference with GGN, where GGN serves as an intermediary that combines game processes and network mining approaches. The color of positive edges in the GGN is red and the color of negative edges is blue.}
\descriptionlabel{}
\label{fig:ggn_model}
\end{figure*}

\subsection{Game Generative Network Mining}

For small GGNs (e.g., $n \leq 10$), we can conclude the relationships between agents by observation or simple statistical analysis. For big GGNs, it is necessary to apply an automatic network mining method. From our definition above, a GGN is essentially a signed network. With such cognition, the problem of mining for GGNs is indeed the problem of mining for signed networks \cite{leskovec2010predicting,kunegis2009slashdot,tang2016survey}. Since the focus of this paper is on exploring the capabilities of GGN, we design a relatively simple and intuitive method to mine for GGN and observe the power of GGN by using this method.

The most direct relationship between agents in GGNs is the first-order proximity. For each pair linked by an edge $e_{u,v}$ with weight $w_{uv}$, if $w_{uv}>0$, $e_{u,v}$ represents a positive relationship from $u$ to $v$; if $w_{uv}<0$, $e_{u,v}$ represents a negative relationship from $u$ to $v$; otherwise, there is no first-order proximity between $u$ and $v$ represented by $e_{u,v}$. Since not all the edges generated are meaningful, we should filter out some edges based on specific cases. For instance, given a relationship network of agents, we have prior knowledge about whether two agents know each other, thus the edges beyond the edges in the relationship network, including self-loops, should be filtered out. 

Besides, the first-order proximity cannot represent all relationships between agents, and because the behaviors of agents are not fully reflected in a game, some edges may be missed. Based on the multiplicative transitivity \cite{kunegis2009slashdot} of signed networks, if there is an edge with weight $w_{uv}$ between $u$ and $v$, and there is an edge with weight $w_{vx}$ between $v$ and $x$, then there may be a hidden third edge with weight $w_{uv}\cdot w_{vx}$ between $u$ and $x$. Multiplicative transitivity is proved by the fact that two agents connected by an even number of negative edges can be considered balance \cite{hage1983structural}. 

Based on the multiplicative transitivity, we introduce the exponential kernel to evaluate the $k$th-order relationships between agents. The $k$th-order exponential kernel is defined as the weighted sum of matrix powers, where the weight decays with the inverse factorial:
\begin{equation}
\label{exp_k}
\text{exp}_k(A) = \sum_{i=0}^{k}\frac{1}{i!}A^i.
\end{equation}

\noindent The sign of the $(i,j)$-th entry of the exponential kernel indicates the positive or negative relationship from $i$ to $j$. The greater the absolute value, the stronger the relationship. Combining with the generative process of GGN, the process of relationship inference with GGN is illustrated in Fig. \ref{fig:ggn_model}.

\section{Game Generative Network for Team Game}

\label{sec:team_game}

In spite of the generality of GGNs that can apply to a variety of games, for an empirical study on a specific game, we will show how to apply GGNs to the team game for relationship inference as a concrete example. The goal of team game is to divide $n$ agents into small groups. The utility of each agent in team games is affected by the relationships between agents. Relationships between agents in team games have a broad definition. For instance, a positive or negative relationship can represent a friend or a foe relation between two agents. In this case, everyone wants to team up with her best friends and tries not to team up with those she dislikes. On the other hand, relationships in team games can evaluate whether the cooperation between two agents will bring positive (win-win) or negative (loss-loss) effects. And in this case, everyone wants to team up with a group of agents that make her obtain the biggest profits.

\subsection{Team Game Definition}

In this section, we design an ideal game model for the team game. Consider an undirected signed network $\mathcal{G}(\mathcal{V}, \mathcal{E}, w)$ with the node (agent) set $\mathcal{V}$, the edge set $\mathcal{E}$ and the weight $w(e)$ associated with each edge $e$. 
In our settings here, team game aims to find a set of teams and each node has one and only one team, and we use $t(i)$ to denote the team which node $i$ belongs to. The strategy of agent $i$ is to quit the current team $t(i)$ and join a new team $t'(i)$.

From the above analysis, we can conclude that each agent wants to team with those agents who have positive relationships with her, thus we define the gain function as follows:
\begin{equation}
g_i(\textbf{s}) = \sum_{j=1,j\neq i}^n A_{ij}\delta(i,j),
\end{equation}
\noindent where $n$ is the number of nodes, $A_{ij}$ is the $(i,j)$-th entry of the weighted adjacent matrix $A$ of the signed network $\mathcal{G}(\mathcal{V}, \mathcal{E}, w)$ above, and $\delta(i,j)$ is an indicator function which is equal to $1$ when $t(i) = t(j)$ and $0$ otherwise.

However, when most of the relationships of agents are positive, such gain function will lead to a trivial solution where all the agents form a single team. To solve this problem, we consider that the number of members in each group should be as balanced as possible, which is consistent with the fact that a team with a large size will weaken the relationships. To this end, we define the loss function as $l_i(\textbf{s}) = c|t(i)|$,
\noindent where $|t(i)|$ is the size of the team which agent $i$ belongs to and $c$ is a parameter to balance the gain and the loss. Finally, the utility of agent $i$ with strategy list $\textbf{s}$ is calculated as follows:
\begin{equation}
u_i(\textbf{s}) = g_i(\textbf{s}) - l_i(\textbf{s}) = \sum_{j=1,j\neq i}^n A_{ij}\delta(i,j) - c|t(i)|.
\end{equation}

\begin{algorithm}[tp!]
    \SetAlgoNoEnd
    \caption{Team Game($\mathcal{G}$)}  

    Initialize each node to a singleton team\;
    \While{not converge}
    {
      Random shuffle the nodes\;
      \For{$i$ \textbf{from} $1$ \textbf{to} $n$}
      {
        $s_i = \text{argmax}_{s_i^*} u_i(\{s_1, s_2, \dots, s_i^*, \dots, s_n\})$\;
        Node $i$ joins the new team by taking strategy $s_i$\;
      }
    }
    \label{LocalEquilibrium}
\end{algorithm}

\paragraph{Existence of Nash Equilibria}

We will prove that the team game we designed is a finite exact potential game which always possesses pure strategy Nash equilibria \cite{monderer1996potential}. 
Firstly, let us recall the definition of the exact potential game.
A game is an \textit{exact potential game} if there exists an associated potential function $\Phi(\cdot)$ defined on the strategy profiles that satisfies $\Phi(s'_i,\textbf{s}_{-i}) - \Phi(s_i,\textbf{s}_{-i}) = u_i(s'_i,\textbf{s}_{-i}) - u_i(s_i,\textbf{s}_{-i})$ for every strategy profile $\textbf{s}_{-i}$ of all agents except agent $i$ and every strategy $s_i$ of agent $i$.  In an exact potential game that contains a finite number of strategy profiles, Nash equilibria always exist. And every better response in which each agent sequentially changes her strategy to improve her own utility, will finally converge to a Nash equilibrium \cite{monderer1996potential}.

\begin{myTheo}
The team game with utility function $u_i(\cdot)$ is an exact potential game with potential function $\Phi(\textbf{s}) = \sum_{i=1}^n \frac{1}{2}(g_i(\textbf{s}) - l_i(\textbf{s}))$.
\end{myTheo}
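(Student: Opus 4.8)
The plan is to verify the exact-potential identity directly by first rewriting $\Phi$ in a ``global'' closed form that depends only on the team partition, and then computing the effect of an arbitrary unilateral deviation on both $\Phi$ and $u_i$ and checking they agree.

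First I would expand $\Phi(\textbf{s}) = \tfrac12\sum_{i=1}^n g_i(\textbf{s}) - \tfrac12\sum_{i=1}^n l_i(\textbf{s})$. Because the network is undirected we have $A_{ij}=A_{ji}$ and $\delta(i,j)=\delta(j,i)$, so $\sum_{i}\sum_{j\neq i}A_{ij}\delta(i,j)$ counts every within-team edge exactly twice; hence the first sum equals $\sum_{\{i,j\}:\,t(i)=t(j)} A_{ij}$, the total intra-team weight, which I denote $W(\textbf{s})$. For the second sum, $\sum_i |t(i)| = \sum_{T} |T|^2$, where the outer sum ranges over the teams $T$ of the current partition, since each team $T$ contributes $|T|$ once for each of its $|T|$ members. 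Thus $\Phi(\textbf{s}) = W(\textbf{s}) - \tfrac{c}{2}\sum_T |T|^2$, a function of the partition alone.

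Next I would fix an agent $i$ and a deviation in which $i$ leaves her current team $T_{\mathrm{old}}=t(i)$, of size $a$ counting $i$, and joins a team whose current membership is $T_{\mathrm{new}}$, of size $b$ not counting $i$ (the degenerate cases $T_{\mathrm{new}}=T_{\mathrm{old}}$ and $T_{\mathrm{new}}=\varnothing$, i.e.\ forming a fresh singleton, are covered by the same formulas). Only edges incident to $i$ change their within-team status, so $\Delta W = \sum_{j\in T_{\mathrm{new}}} A_{ij} - \sum_{j\in T_{\mathrm{old}}\setminus\{i\}} A_{ij}$, and only the sizes of $T_{\mathrm{old}}$ and $T_{\mathrm{new}}$ change, so $\Delta\bigl(\sum_T|T|^2\bigr) = [(a-1)^2+(b+1)^2]-[a^2+b^2] = 2(b-a+1)$; hence $\Delta\Phi = \Delta W - c(b-a+1)$. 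On the other hand $u_i=g_i-l_i$ changes by $\Delta g_i = \sum_{j\in T_{\mathrm{new}}} A_{ij} - \sum_{j\in T_{\mathrm{old}}\setminus\{i\}} A_{ij} = \Delta W$ and $\Delta l_i = c(b+1)-ca$, so $\Delta u_i = \Delta W - c(b-a+1) = \Delta\Phi$. This is exactly the exact-potential identity $\Phi(s'_i,\textbf{s}_{-i})-\Phi(s_i,\textbf{s}_{-i}) = u_i(s'_i,\textbf{s}_{-i})-u_i(s_i,\textbf{s}_{-i})$; since the number of partitions of $n$ agents is finite, the game is a finite exact potential game, and pure Nash equilibria together with convergence of better-response dynamics (such as Algorithm~\ref{LocalEquilibrium}) follow from \cite{monderer1996potential}.

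I expect the only real difficulty to be careful bookkeeping: keeping straight which team sizes include $i$ before versus after the move, and recognizing why the factor $\tfrac12$ in $\Phi$ is forced — each intra-team edge is ``shared'' by its two endpoints, whereas each agent's loss term $c|t(i)|$ is individual, and the quadratic $\sum_T|T|^2$ is precisely what absorbs this asymmetry so that the halved sum still reproduces each agent's full marginal. A secondary point worth stating explicitly is that when $i$ moves, only the utilities of agents in $T_{\mathrm{old}}$ and $T_{\mathrm{new}}$ are affected, so every other term of $\Phi$ is untouched; this locality is what makes the cancellation go through cleanly.
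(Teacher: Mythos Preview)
Your proof is correct and follows essentially the same route as the paper: both verify the exact-potential identity by tracking how the gain terms double under symmetry and how the squared team sizes change as $(a-1)^2+(b+1)^2-a^2-b^2=2(b-a+1)$, which the $\tfrac12$ then converts into $\Delta l_i$. Your extra step of first rewriting $\Phi$ as $W(\textbf{s})-\tfrac{c}{2}\sum_T|T|^2$ is a helpful clarification the paper leaves implicit, but the underlying computation is the same.
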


\begin{proof}
\begin{align}
&\Phi(s'_i,\textbf{s}_{-i}) - \Phi(s_i,\textbf{s}_{-i}) \notag\\
= & \frac{1}{2}(\sum_{j=1,j\neq i}^n A_{ij}\hat{\delta}(i,j) - \sum_{j=1,j\neq i}^n A_{ij}\delta(i,j)) \notag\\
& - \frac{c}{2}((|t'(i)|+1)^2+(|t(i)|-1)^2 - |t'(i)|^2-|t(i)|^2) \notag\\
= & g_i(s'_i,\textbf{s}_{-i}) - g_i(s_i,\textbf{s}_{-i}) - c(|t'(i)|+1) + c|t(i)| \notag\\
= & g_i(s'_i,\textbf{s}_{-i}) - g_i(s_i,\textbf{s}_{-i}) - l_i(s'_i,\textbf{s}_{-i}) + l_i(s_i,\textbf{s}_{-i}) \notag\\
= & u_i(s'_i,\textbf{s}_{-i}) - u_i(s_i,\textbf{s}_{-i}). \notag
\end{align}
\end{proof}

The team game is a game with finite agents and finite strategy space, so the team game with utility function $u_i(\textbf{s})$ is a finite exact potential game, thus possesses pure strategy Nash equilibria.

Based on the finite improvement property of potential game, we propose an algorithm for computing the Nash equilibrium of the ideal team game model, which is shown in Algorithm \ref{LocalEquilibrium}. We firstly initialize each node to a singleton team and then repeat the following process until the game converges: random shuffle the nodes, and then let each node quit the current team and join the team which maximizes her utility in the current state.

\subsection{Experiments}

\begin{table}[tp!]
    \begin{center}
    % \scriptsize
    \begin{tabular}{l|r|r|r|r}
        \toprule
        Datasets & Nodes & Edges & + edges & - edges \\
        \midrule
        Slashdot1 & 3,869 & 93,498 & 77,052 & 16,446  \\ 
        \hline
        Slashdot2 & 3,872 & 27,298 & 20,134 & 7,164 \\
        \hline
        Epinions1 & 6,605 & 182,674 & 158,170 & 24,504 \\
        \hline
        Epinions2 & 6,591 & 36,992 & 32,832 & 4,160 \\
        \bottomrule
    \end{tabular}
    \end{center}
    \caption{Statistics of the datasets.}
    \label{table:dataset}
\end{table}

\begin{figure}[tp!]
\centering
\subfigure[]
{
\begin{minipage}[t]{0.48\linewidth}
\centering
\includegraphics[width=4.3cm]{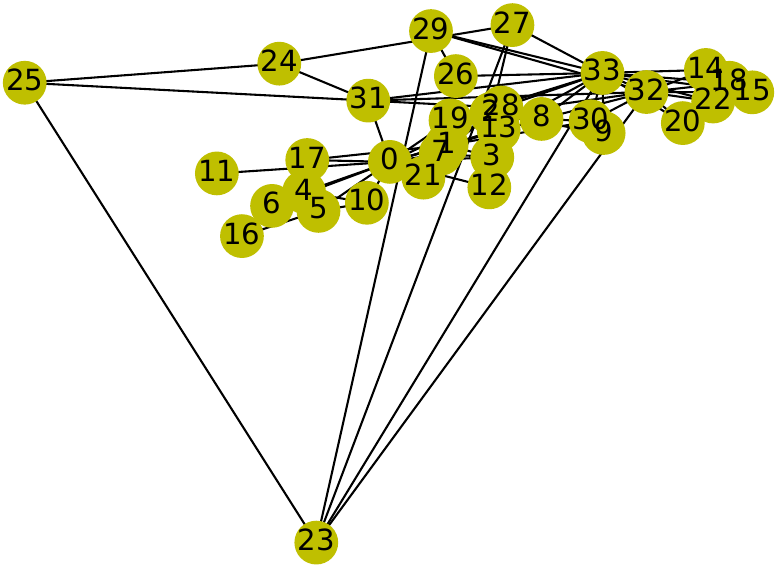}
\end{minipage}
}
\subfigure[]
{
\begin{minipage}[t]{0.45\linewidth}
\centering
\includegraphics[width=4.0cm]{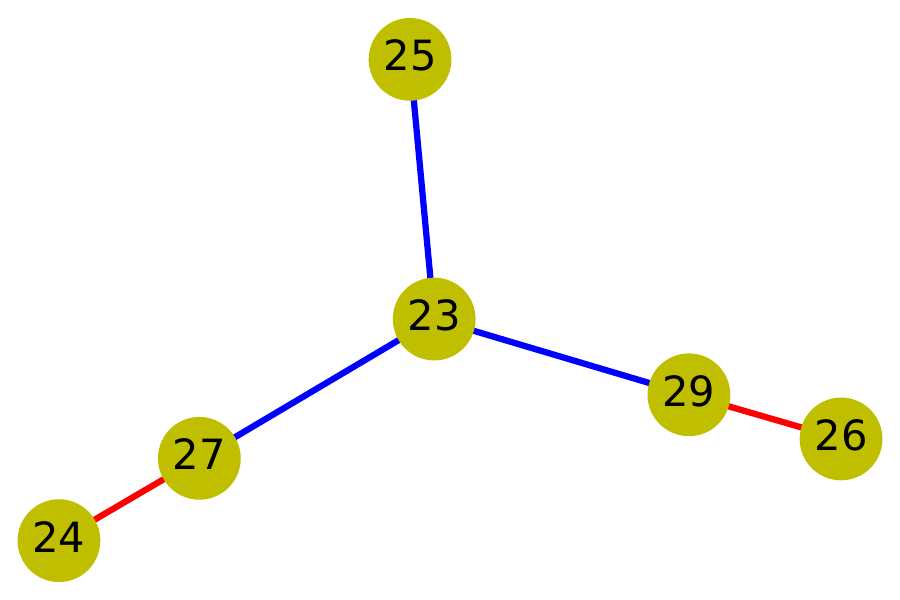}
\end{minipage}
\label{fig:case_study_ggn}
}
\caption{The network structure (a) and the GGN (b) of case study. In the GGN, the color of positive edges is red and the color of negative edges is blue.}
\label{fig:case_study}
\end{figure}

In this section, we will show how to apply GGNs to conduct relationship inference in team games. 

\subsubsection{Experimental Setup}

Based on the fact that most of the time we can only get the information that there is a relationship between two agents, but we do not know which kind of relationship (e.g., positive or negative) between them, we assume that the network $\mathcal{G}$ we get of $n$ agents is unsigned and unweighted, where an edge between $u$ and $v$ means that $u$ and $v$ know each other.
Given the network of the agents and the results of their decision makings, we can build a network in the framework of GGN.

Based on the above considerations, we design the experiments as follows: For a real network $\mathcal{G}'$ which is signed and weighted, we stimulate the team game on $\mathcal{G}'$ and generate teaming results which can be considered as the real strategies the agents take. We use the teaming results and the ideal team game model on $\mathcal{G}$ (without $\mathcal{G}'$) to generate GGN and calculate the $k$th-order exponential kernel of the GGN to infer the relationships. Since $\mathcal{G}'$ is the real network, the weighted adjacency matrix of $\mathcal{G}'$ represents the real relationships between agents, which can serve as the ground-truth to evaluate the performance with respect to relationship inference.

We extract four datasets from Slashdot and Epinions \cite{massa2005controversial,kunegis2009slashdot,leskovec2010signed} (two from each one). Slashdot is a technology news website that lets users tag other users as friends and foes and Epinions is a product review website where users build links that indicate trust or distrust about other users. In order to make the team game converge, we build the network with undirected edges. The statistics of these datasets are listed in Table \ref{table:dataset}.

\begin{figure}[tp!]
\centering
\includegraphics[width=0.79\linewidth]{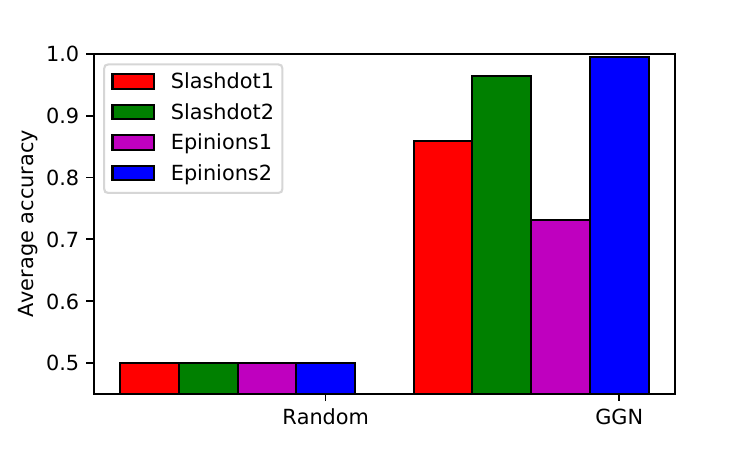}
\caption{Average accuracy of relationship inference in the team game.}
\label{fig:ggn_bar}
\end{figure}

\subsubsection{Case Study}

We use Zachary's Karate Club network \cite{zachary1977information} as a toy example in our experiments. Zachary's Karate Club is a well-known social network of a university karate club with 34 nodes and 78 edges. 
We assume that node 23 has negative relationships with all her neighbors and all other relationships are positive, which leads to the outcome that no one wants to team with node 23. We stimulate the game process and generate the GGN. The structure of the real network and the generated GGN are illustrated in Fig. \ref{fig:case_study}. 

From the snapshot of the GGN in Fig. \ref{fig:case_study_ggn}, we observe that node 25, 27, 29 have first-order negative relationships to node 23, and node 24, 26 have second-order negative relationships to node 23, which is consistent with the real network.

\subsubsection{Relationship Inference for Team Game}

\begin{figure}[tp!]
\centerline{\includegraphics[width=\linewidth]{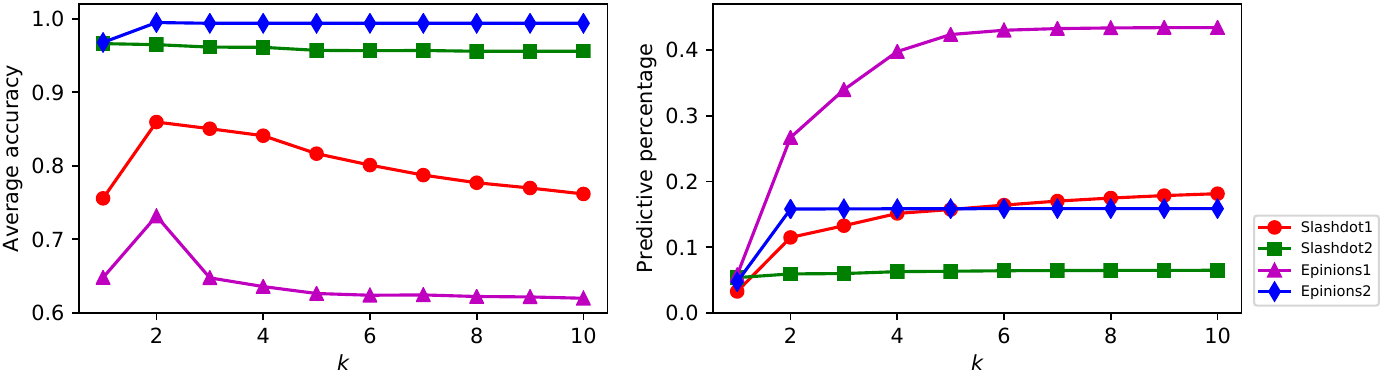}}
\caption{Results of relationship inference in the team game when $c = 0.2$.}
\label{fig:ggn_02}
\end{figure}

\begin{figure}[tp!]
\centerline{\includegraphics[width=\linewidth]{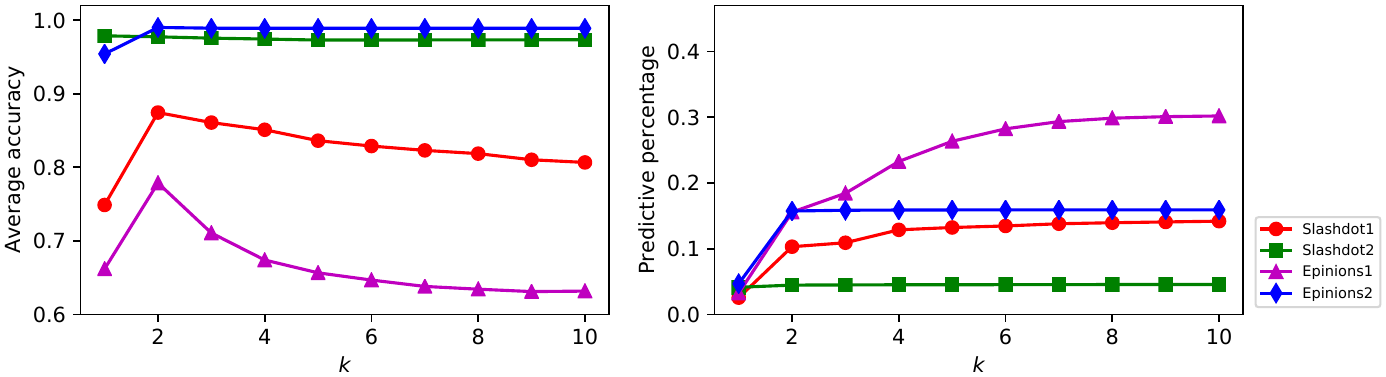}}
\caption{Results of relationship inference in the team game when $c = 0.3$.}
\label{fig:ggn_03}
\end{figure}

In this section, we evaluate the performance of mining on GGN in the relationship inference task. Since the number of positive edges and negative edges is imbalanced in our datasets and for ease of understanding, we use average accuracy (the average accuracy of each class) as the evaluation metric. We set $c = 0.2$ for the team game process and $k = 2$ for the $k$th-order exponential kernel in our network mining approach. 
Here we should mention that our task is brand new and there is no baseline that uses the game outcome, i.e., the strategies taken by agents, as input to infer the relationships between agents, so only random guessing can be compared in our experiments.

From the results shown in Fig. \ref{fig:ggn_bar}, we have several observations. First, the prediction based on the GGN framework far outperforms random guessing, which indicates the feasibility of our model for mining the hidden relationships among agents with game behaviors. Second, comparing with Slashdot1 and Epinions1, Slashdot2 and Epinions2 are sparser, and the average accuracies of Slashdot2 and Epinions2 are higher than those of Slashdot1 and Epinions1 respectively. This is because, in the team game, the complexity of strategies taken by users is related to the complexity of their relationships, a sparser network means that users' strategies are easier to understand, which is compatible with our results.

To observe the results in different settings of team game, we also conduct experiments on team games with different $c$ values by different $k$th-order exponential kernels. The results are shown in Fig. \ref{fig:ggn_02} and Fig. \ref{fig:ggn_03}, where predictive percentage is the proportion of non-zero values in the predicted edges. 
From the results, we find that $k = 2$ is a good trade-off point with relatively high average accuracy and predictive percentage in most cases. Besides, according to the predictive percentages, we find that only a relatively small percentage of relationships could be predicted in a game process. This is because only a small part of relationships is reflected in a single game, but it is still very valuable to infer these relationships to understand this part of agents. In conclusion, the results demonstrate the feasibility of our GGN framework on revealing relationships between agents with game behaviors.

\section{Related Work}

In general, our work is related to game theory \cite{osborne1994course,myerson2013game} and social network analysis \cite{wasserman1994social,scott2017social}. There are a lot of studies on social network discovery, such as link analysis \cite{getoor2005link,liben2007link} and community detection \cite{fortunato2010community}. Although there are some research \cite{chen2010game} applying game-theoretic models to network mining, no one, in turn, applies network mining to game systems. In addition, some work conducts relationship inference tasks on various social networks \cite{diehl2007relationship} or studies the cooperation and competition among agents in game theory \cite{colman2003cooperation,barash2004survival}, but none can infer the hidden relationships between agents with game behaviors.
Owing to the framework of GGN we designed, our work has a strong connection to signed network mining \cite{leskovec2010signed,leskovec2010predicting,kunegis2009slashdot,yang2012friend,tang2016survey}. From the perspective of economics, our framework is related to revealed preference \cite{samuelson1948consumption}, which aims at inferring the preferences of individuals with the observed choices.

Recently, \citeauthor{serrino2019finding} train a multi-agent reinforcement learning agent to find friend and foe in \textit{The Resistance: Avalon}, a hidden role game played by several players. However, their model is just designed for similar hidden role games with several players, and can not be applied to infer the relationships between a large number of social individuals with game behaviors.

\section{Discussion and Conclusion}

In this paper, we propose a novel game generative network framework to build networks for game systems, which is a combination of game theory and network mining. In general, GGNs sever as the intermediaries between game processes and network mining approaches, which provides a new way to model and mine relationships between social agents with game behaviors. 
To show how to apply GGNs to reveal the relationships of agents, we first give an illustration on Prisoners' Dilemma, and then introduce the team game as a concrete example and conduct experiments on it. Indeed, there are many other potential applications of game generative networks, such as mining the hidden features of agents by using advanced technologies, e.g., graph neural networks (GNNs).
For future work, we plan to further improve the game generative network framework and apply it to more real applications.

\bibliographystyle{named}
\bibliography{main}

\end{document}